\documentclass{article}

\usepackage{caption}
\usepackage{color}
\usepackage[T1]{fontenc}
\usepackage{amsmath,amsthm,amsfonts,amssymb}
\usepackage[mathscr]{eucal}
\usepackage{indentfirst}
\usepackage{graphicx}
\usepackage{exscale,relsize}
\usepackage{multicol}
\usepackage[dvips,final]{epsfig}
\usepackage{bbm}
\usepackage{siunitx}
\usepackage{comment}

\theoremstyle{plain}
\newtheorem{theorem}{Theorem}[section]

\newtheorem{remark}{Remark}[section]
\newtheorem{lemma}{Lemma}[section]

\definecolor{orange}{rgb}{1,0.3,0}
\definecolor{purple}{rgb}{1,0,1}

\newcommand{\approptoinn}[2]{\mathrel{\vcenter{
  \offinterlineskip\halign{\hfil$##$\cr
    #1\propto\cr\noalign{\kern2pt}#1\sim\cr\noalign{\kern-2pt}}}}}

\theoremstyle{definition}

\def\P{{\mathbb P}}
\def\N{{\mathbb N}}
\def\E{{\mathbb E}}

\def\beqn{\begin{equation}}
\def\eeqn{\end{equation}}
\def\ol{\overline}


%

\begin{document}

\title{Metastability and Multiscale Extinction Time on a Finite System of Interacting Stochastic Chains}

\author{{L. Brochini}  
\thanks{Instituto de Matem\'{a}tica e Estat\'{i}stica, Universidade de S\~{a}o Paulo.}
\and {M. Abadi}
\thanks{Instituto de Matem\'{a}tica e Estat\'{i}stica, Universidade de S\~{a}o Paulo. email: leugim@ime.usp.br} 
\date{}
}

%

\maketitle

\begin{abstract}

We studied metastability and extinction time of a finite system with a large number of interacting components in discrete time by means of analytical and numerical investigation. The system is markovian with respect to the potential profile of the components, which are subject to leakage and gain effects simultaneously. We show that the only invariant measure is the null configuration, that the system ceases activity almost surely in a finite time and that extinction time presents a cutoff behavior.  Moreover, there is a critical parameter determined by leakage and gain below which the extinction time does not depend on the system size. Above such critical ratio, the extinction time depends on the number of components and the system tends to stabilize around a unique metastable state. Furthermore, the extinction time presents infinitely many scales with respect to the system size.

\end{abstract}

\section{Introduction}

The extinction time of a stochastic process in finite systems has been an issue of interest in particle systems, especially in models for epidemics, such as the contact process \cite{liggett1999,mountford1999,valesin2010,mountford2016,schapira2017}. The route to extinction is known to be affected by the emergence of metastability \cite{schonmann1998,cassandro1991,mountford2013,mourrat2016}, a scenario set by phase transitions exhibited by the corresponding infinite system. For the contact process in finite graphs, for instance, the system is subject to an abrupt change in behavior according to a certain parameter, the infection rate, being below or above a critical value, causing the extinction time to depend logarithmically on the number of particles in the subcritical case \cite{chen1994} and exponentially in the supercritical case \cite{cassandro1984,mountford1999}. 

Here we investigate the phenomenon of metastability and examine the extinction time with respect to the network size in a different system of interacting stochastic chains. The study is based on analytic calculations explored in by numerical experiments. Our model is inspired by leaky integrate and fire neuronal networks. A stochastic version in discrete time of such models can be found in \cite{Brochini2016,Costa2017}  which is a special case of the model introduced in \cite{Galves2013}. 
Here we deal with a modified version where the components are under some kind of environmental competition to fire. 

The model is as follows. The system is composed of $N$ components, naturally identified as $1,2,\dots,N$. Each component $i$ has a potential that evolves in discrete time, given by $ U_n(i)$ at time $n$. We denote as ${\bf U}_n$ the potential array ${\bf U}_n=( U_n(1),\dots, U_n(N))$. The system evolution is as follows. It is an order one Markov chain where the state of the system ${\bf U}_{n+1}$ at time $n+1$,  depends probabilistically only on the state of the system in the previous instant ${\bf U}_n$.

The system, at every time step, is subject to a leakage effect that leads to a reduction in the potential of each element by a factor of $0<\mu<1$. Moreover, it may either exhibit spontaneous discharge of one -- and only one -- of its components or no discharge at all. If a component discharges at a given time, then its potential is reset to zero in the next instant while all other components suffer a uniform and positive increase of potential $w$.  The probability that a certain component will discharge is a saturated linear function of its potential at that time, normalized over the system size.

Since after firing the potential of a component resets to zero, the probability of the system not to fire is always positive and bounded from below by $1/N$. 
It will cause the system to eventually cease to fire. In order to understand how the system behaves up to this time, it is useful to examine an associated stochastic process $\{Y_n\}_{n\in \mathcal{N}}$ that represents the system firing history, where $Y_n$ is an indicative function that any component fired at time $n$. The process extinction time is then equivalent to the last time a 1 is observed in the process $Y$. Note that the system potential does not achieve null configuration upon extinction. Instead, it approaches asymptotically the absorbing state. Therefore, we have to deal with the issue that the extinction time is not a stopping time, which is addressed in sections \ref{nofire} and \ref{StopCriterion}.

The main result of this paper is that a parameter dependent on leakage and gain  determines a transition on the extinction time: below a critical value of this parameter it does not depend to the system size and above the critical value it presents infinitely many scales. We found that the extinction time law with respect to the number of components undergoes multiple transitions in this system. An important factor to this matter is the ratio $\gamma=w/(1-\mu)$. We show that if $\gamma<1$ the extinction time does not depend on the number of components. Conversely, if $\gamma>1$, the potential array tends to uniformize. Under the firing regime it approaches an invariant potential close to $\gamma$, which configures a metastable state.  In this case, the firing blocks follow a geometric distribution of parameter $1/N$. 

Surprisingly, still when $\gamma>1$, the extinction time law not only depends on $N$, but suffers multiple changes according to the specific values of $\mu$ and $w$.  Considering the situation that the system is close to the metastable state, the route to extinction is ruled by how the system loses potential until it reaches a value below one. We found a crucial role for the number $m$ of times the system must fail to fire to reach a potential below 1: the extinction time is proportional in expectation to $N^m$.  Note that $m$ is independent of $N$ and also determined by the specific values of $\mu$ and $w$, not only the ratio $\gamma$, meaning there are systems that reach the same metastable average potential close to the same value of $\gamma$, having very different extinction time scales.  Moreover, for any given value of $w$ there can be infinitely many changes in the extinction time law for $\mu$ in the interval $]0,1[$. These results are presented in sections \ref{regimes} and \ref{TauSec} and illustrated numerically in section \ref{SecSimul}.

\section{The Model}

Let $N\in \N$ be the number of components of the system. We define a Markov chain
chain ${\bf U}_n= (U_n(1),\dots,U_n(N))_{n\in\N}$, where $U_n(i) \ge 0, \  i=1,\dots,N$.  The evolution of the system will be determined by the following two parameters. Let $0<\mu<1$ be the leakage factor and  $w>0$ be the potential gain of a non-firing component after the system fires. At each time step, a component $i$ may fire. We also introduce the resulting potential vector $\Delta^i(u)$ after the component $i$ of the given vector $u$ fires. Then denote
\[
\Delta^i (u)(j)=   \left\{
\begin{array}{ll}
 \mu \ u(j) +w  &\  \  j\not= i ,\\
0 &\ \  j=i .
\end{array}
\right.
\]

This means that each component that  did not fire suffers a leakage  effect  having now a proportion $\mu$ of its  potential on the previous instant while also receiving an increase of $w$ due to the stimulus of the component that fired.
The component that fired resets its potential to zero.
If the system did not fire, then there is only the leakage effect over all the components.
The Markov chain is defined by the transition probabilities given as follows:

 \begin{eqnarray*}
\P( {\bf U}_{n+1}= \Delta^i(u)\ | \ {\bf U}_n=u) &=& \frac{\phi(u(i))}{N} , \quad i=1,\dots,N, \\
\P( {\bf U}_{n+1}= \mu u \ | \ {\bf U}_n=u)   &=& 1-  \frac{\sum_{j=1}^{N}\phi(u(j))}{N} .
\end{eqnarray*}

Where the firing probability function $\phi:{\mathbb R}_{\ge 0}\to [0,1]$ is the truncated identity
\[
\phi(u)= \left\{
\begin{array}{ll}
u &   0 \le u<1 ,\\
1 & u \ge 1 .
\end{array}
\right.
\]


Since $\phi \le 1$, the above transition probabilities are well defined.\\

\section{Mean potential evolution}

Our results will show that the process has a unique invariant state that is the zero configuration ${\bf U}=(0,\dots,0).$ 
This state is reached only when taking limit to infinity in the time scale.
However, we will show that the system ceases activities almost surely at  a finite time $\tau$.

We can describe the evolution of the process until $\tau$ in part with the evolution of the total potential of the system at time $n$.
When the system does not fire at time $n+1$, all elements lose potential resulting in the equation ${\bf U}_{n+1}=\mu {\bf U}_n$ 
smaller than the previous total potential ${\bf U}_n$. 
On the other hand, when the system fires, one can ask what is the invariant total potential, that is the solution to the equation

\[
\sum_{i\in I}U_{n+1}(i) = \mu  \sum_{i\in I \setminus \{i_0\}}U_{n}(i) + (N-1) w  \ ,
\]

where $i_0$ is the spiking site and $I$ is the set of all components. Even when the actual potential depends on the firing site $i$, we show later on that
the system has a tendency to uniformize  their  potentials 
over the sites  (except for the last spiking one) 
and thus the no-null potential ${\bf U}_n(i)$ may be well approximated by
 $U_n/(N-1)$. 
Under this condition, the \emph{invariant} total potential under the firing regime is
\[
\mathcal{U}=\mathcal{U}(\mu, w,N) = (N-1)  \frac{w}{1-\mu - \frac{\mu}{N-1}} \ .
\]
As a consequence,  the typical non-zero potential (during this sustained firing regime) in a given site and for large $N$ is approximately 
${w}/{(1-\mu)}$. 
Direct computations show that for a total potential above  $\mathcal{U}$,  the system loses total potential whether it discharges or not. 
On the contrary, for a potential   below $\mathcal{U}$, the system typically loses total potential whenever it fails to discharges and gains total potential whenever it discharges. Moreover, even when the system discharges, one typically gets  a potential below $ \mathcal{U}$.
This makes the region of potential levels above $\mathcal{U}$ a transient one.
Once the system is close to $\mathcal{U}$, it either loses or gains potential, moving along the interval $(0, \mathcal{U})$.

The next lemma quantifies   the system tendency to uniformize their potentials mentioned above.
To that,  we define an auxiliary sequence of random vectors $V_n=(V_n(1),\dots,V_n(N))$ with entries $V_n(i)$ given by the \emph{order statistics} of $U_n(i)$. Namely, define
\[
V_n= ({U}_n^{(1)},\dots,{ U}_n^{(N)}) \ ,
\]

where ${ U}_n^{(1)} \le { U}_n^{(2)} \le \dots \le { U}_n^{(N)}$ is a re-ordering of the components of ${\bf U}_n$.

The next lemma says firstly that the non-firing effect keeps  the ordering of the potential  while 
 the firing effect keeps the ordering for the potentials, except for the firing component that resets to zero and its potential becomes the 
 smaller one.

 Let $Y_n$ 
be the indicator  function that  the system fires at time $n$. \\

\begin{lemma} 
\begin{itemize}
\item[(a)]  Suppose $Y_{n+1}=0$, then $V_{n+1}(i)=\mu V_{n}(i)$ for all $i=1,\dots,N$.
\item[(b)]  Suppose $Y_{n+1}=1$ and that the index of  $V_n$  corresponding to the  firing component is $i_0$.
Then $V_{n+1}(1)=0 $, 
$ V_{n+1}(i)=\mu  V_{n}(i) + w$, for all $i=i_0+1,\dots,N$ and 
$ V_{n+1}(i)=\mu V_{n}(i-1) + w$, for all $i=2,\dots,i_0$.
\end{itemize}
\end{lemma}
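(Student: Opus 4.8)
The plan is to read both identities straight off the one-step transition rule, using only that an affine map with positive slope preserves the ordering of real numbers, and hence acts on a vector's order statistics in the obvious way.

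\emph{Part (a).} On the event $Y_{n+1}=0$ the transition rule gives $\mathbf{U}_{n+1}=\mu\,\mathbf{U}_n$ entrywise. Since $\mu>0$, the map $x\mapsto\mu x$ is strictly increasing, so it carries the $i$-th smallest entry of $\mathbf{U}_n$ to the $i$-th smallest entry of $\mathbf{U}_{n+1}$; that is, $V_{n+1}(i)=\mu V_n(i)$ for every $i=1,\dots,N$. Ties cause no trouble, as any ordering of the entries of $\mathbf{U}_n$ that realises $V_n$ also realises $V_{n+1}$.

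\emph{Part (b).} On the event $Y_{n+1}=1$, let $j$ be the original label of the firing component, so that $\mathbf{U}_{n+1}=\Delta^{j}(\mathbf{U}_n)$, and by hypothesis $U_n(j)=V_n(i_0)$. After the update, component $j$ has potential $0$ while every $k\neq j$ has potential $\mu U_n(k)+w\ge w>0$; hence $0$ is a minimal entry of $\mathbf{U}_{n+1}$ and $V_{n+1}(1)=0$. For the other $N-1$ slots, observe that $\{\,U_{n+1}(k):k\neq j\,\}=\{\,\mu U_n(k)+w:k\neq j\,\}$ and that $\{\,U_n(k):k\neq j\,\}$ is exactly the sorted list $V_n(1)\le\cdots\le V_n(i_0-1)\le V_n(i_0+1)\le\cdots\le V_n(N)$ obtained by deleting the entry in position $i_0$. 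Since $x\mapsto\mu x+w$ is strictly increasing it preserves this ordering, so the sorted list of $\{\,U_{n+1}(k):k\neq j\,\}$ is
\[
\mu V_n(1)+w\ \le\ \cdots\ \le\ \mu V_n(i_0-1)+w\ \le\ \mu V_n(i_0+1)+w\ \le\ \cdots\ \le\ \mu V_n(N)+w ,
\]
and this list is precisely $V_{n+1}(2)\le\cdots\le V_{n+1}(N)$. Reading off its $(i-1)$-th term: when $i-1<i_0$, i.e. $i\in\{2,\dots,i_0\}$, the term is $\mu V_n(i-1)+w$; when $i-1\ge i_0$, i.e. $i\in\{i_0+1,\dots,N\}$, deleting position $i_0$ shifts the index up by one and the term is $\mu V_n(i)+w$. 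These are the asserted formulas, and the boundary cases $i_0=1$ and $i_0=N$ are automatically included because one of the two index ranges is then empty.

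There is essentially no obstacle here beyond careful bookkeeping: the one thing to keep an eye on is the re-indexing caused by deleting the firing component from the order statistics, together with the elementary but essential use of $w>0$ to guarantee that the reset value $0$ really occupies slot $1$ after firing.
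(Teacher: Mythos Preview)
Your proof is correct and follows exactly the same approach as the paper: both parts rest solely on the monotonicity of the maps $u\mapsto\mu u$ and $u\mapsto\mu u+w$. Your write-up is simply a more explicit version of the paper's two-line argument, spelling out the index bookkeeping after deleting the firing component and noting that $w>0$ guarantees the reset value $0$ lands in slot $1$.
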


\begin{proof}
Item $(a)$ follows since the application $u\to \mu u$ is monotonic. 
Item $(b)$ follows since the application $u\to \mu u + w $ is also monotonic.
\end{proof}

\begin{remark} Item $(b)$ says in particular that, given $U_n$ (or equivalently $V_n$), the way the system receives more potential
is when $i_0$ equals 2, namely the firing component is the one with the minimum potential and the way the system receives
less potential is when the firing component is the one with the largest potential.  
\end{remark}

There are two effects that work concomitantly to make the system evolve towards uniformity. First, the influence of the firing site over all the others is always $w$, regardless of the potential it had at firing time. Additionally, all elements lose a portion of their potential at rate $\mu$. Both effects are responsible for simultaneously attenuating very large potentials and increasing very low potentials. 
The lemma below   
shows the stable potential

\begin{lemma} \label{basin}
Suppose  $ (1-\mu^{i-1}) \ \gamma    \le V_n(i) \le \gamma$ for all $i$ and that $Y_{n+1}=1$. Then
$  (1-\mu^{i-1}) \ \gamma    \le V_{n+1}(i) \le \gamma . $          
\end{lemma}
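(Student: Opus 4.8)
The plan is to feed the explicit one‑step update from the preceding lemma into the claimed two‑sided bound and check it coordinate by coordinate, the only algebraic input being the identity $w=(1-\mu)\gamma$ (equivalently $\mu\gamma+w=\gamma$). Assuming $Y_{n+1}=1$, let $i_0$ be the index, in the order statistics $V_n$, of the firing site. Part (b) of the preceding lemma then gives the three regimes
\[
V_{n+1}(1)=0,\qquad V_{n+1}(i)=\mu V_n(i)+w\quad(i_0<i\le N),\qquad V_{n+1}(i)=\mu V_n(i-1)+w\quad(2\le i\le i_0),
\]
so it will suffice to verify $(1-\mu^{i-1})\gamma\le V_{n+1}(i)\le\gamma$ in each regime, starting from the hypothesis $(1-\mu^{j-1})\gamma\le V_n(j)\le\gamma$ for all $j$.

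I would dispatch the upper bound first, since it is uniform: the coordinate $i=1$ has $V_{n+1}(1)=0\le\gamma$, and in the other two regimes monotonicity of $u\mapsto\mu u+w$ together with $V_n(\cdot)\le\gamma$ gives $V_{n+1}(i)\le\mu\gamma+w=\gamma$. For the lower bound, the case $i=1$ is again trivial because $(1-\mu^{0})\gamma=0$. In the top regime $i_0<i\le N$, inserting $V_n(i)\ge(1-\mu^{i-1})\gamma$ yields $V_{n+1}(i)\ge\mu(1-\mu^{i-1})\gamma+w=(1-\mu^{i})\gamma$, and since $\mu^{i}<\mu^{i-1}$ this is at least $(1-\mu^{i-1})\gamma$, so the bound holds with room to spare. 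In the middle regime $2\le i\le i_0$, inserting $V_n(i-1)\ge(1-\mu^{i-2})\gamma$ yields $V_{n+1}(i)\ge\mu(1-\mu^{i-2})\gamma+w=(1-\mu^{i-1})\gamma$ exactly, which closes the argument.

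The computation is short, so there is no serious obstacle; the one point that needs care is the index shift in the middle regime, where coordinate $i$ after the spike inherits coordinate $i-1$ from before the spike, so one must apply the hypothesis at level $i-1$ (width factor $1-\mu^{i-2}$), and it is precisely this shift that makes the arithmetic land on the exact bound $(1-\mu^{i-1})\gamma$ rather than something looser. The degenerate ranges $i_0=1$ (empty middle regime) and $i_0=N$ (empty top regime) are handled by the same formulas with no separate case.
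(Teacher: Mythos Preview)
Your proof is correct and follows essentially the same strategy as the paper's: split according to the three regimes from the preceding lemma, use the fixed-point identity $\mu\gamma+w=\gamma$ for the upper bound, and plug the lower hypothesis into the affine update for the lower bound. If anything, your index bookkeeping in the middle regime (using the hypothesis at level $i-1$ with exponent $i-2$ to land exactly on $(1-\mu^{i-1})\gamma$) is cleaner than the paper's version, which contains a minor index slip in that case.
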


\begin{proof}
$Y_{n+1}=1$ means that the system fired. Thus, $\phi(V_{n+1}(1))=0$.
Further, suppose that the component that fired  was $i_0$.
Now, observe that $\gamma$ is a fixed point for the transformation $u\to \mu u + w$.
Thus, by the lemma above, for $i\not=i_0$, one has 
\[
V_{n+1}(i)= \mu V_{n}(i)+w \le \mu \gamma +w= \gamma.
\]
This shows the second inequality.
Now consider $i\not=i_0$. 

Thus, for $i>i_0$ we have

$V_{n+1}(i) =  \mu V_{n}(i)+ w \ge \mu \gamma (1-\mu^i) +w  =   \gamma (1-\mu^{i+1}) >  \gamma (1-\mu^i) $.
For $ i<i_0$ 
a similar computation holds: 
$V_{n+1}(i) =  \mu V_{n}(i-1)+ w \ge \mu \gamma (1-\mu^{i-1}) +w   =   \gamma (1-\mu^{i}) $.

\end{proof}

 The lemma motivates to introduce the following set:
$$
{\bf B}=\{x_1^ N \in \mathbb{R}^ N: (1-\mu^{i-1}) \ \gamma \le x_i \le \gamma,  \forall 1\le i\le N \}
$$ 
The result of the lemma states  that ${\bf B}$ acts as an invariant state for the firing regime. 
Briefly, if $Y_{n}=1$,  then ${\bf U}_{n+1}\in {\bf B} $, which defines the  meta-stable state of the system.

\section{The three regimes} \label{regimes}

To describe the evolution of the process ${\bf U}_n$ it is useful to describe the evolution of the fire/non-fire process $Y_n$.
A realization of the process $Y_n$ can be described as a composition of sequences of three regimes: the firing regime, the non-firing regime
and the mixed regime.
The first one corresponds to a continuum of discharges (fires)  which happens when the system is, typically,  in the meta-stable state ${\bf B}$.
The second regime begins with the first fail to fire until the next fire or to infinity if the  system ceases to fire.
In the last case, the third regime begins. It lasts until the moment that the system recovers a certain minimum potential 
${\bf U}_n$ such that $V_n(i)\ge \frac{1-\mu^{i-1}}{1-\mu}L$, with
$1\le L\le \gamma$, or to infinity otherwise. The level $L$ is reached when the system behaves, in some sense, similarly as it does when in ${\bf B}$. It will be defined precisely later on.
This, together with other analytical characteristics are described below.

\subsection{ Firing  blocks    }

To describe the law of the  firing  regime we denote with $\theta$ an upper and a  lower bound for the probability 
that the system fires, given that  $\bf{U}_n \in {\bf B}$. That is
\[
\theta= \phi(\gamma)   \left( 1-\frac{1}{N} \right) ,
\]
and
\[
\eta=\phi(\gamma) \left( 1-        \frac{1}{ N} \  \frac{ 1- \mu^N }{1-\mu} \right) \  .
\]
Both bounds follow directly from integrating the bounds given in Lemma \ref{basin}.
Note that 
\[
0 <  \theta-\eta  =  \frac{ \phi(\gamma) \mu^N}{(1-\mu)N} \le  \frac{  \mu^N}{(1-\mu)N}  , 
\]
which shows the closeness between both bounds for large $N$.

The next lemma gives a full picture of the statistical behavior of the firing blocks. 
By the Markovian property, it is enough to describe a firing block beginning at the origin.
The same will be done later on for the non-firing ones.

\begin{lemma} \label{fire}
\begin{itemize}
\item[(a)] Markovian type property 
\[
\eta
 \le 
\P(Y_{n+1}=1 |  Y_n=1, {\bf U}_{n}\in {\bf B}, Y_{n-k}^{n-1}=y_1^{k} )  
\le  \theta .
\]
\
\item[(b)] Geometric fire regime. Let ${\bf U}_{0}\in {\bf B}$. $T=\max\{n:  Y_n=1\}$ verifies
\[
 \eta^t
\le
\P(T > t) \le  \theta^t \quad \forall t \ge 1 .
\]
\end{itemize}
\end{lemma}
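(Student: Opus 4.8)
The plan is to prove part (a) first and then obtain part (b) as a direct consequence. For part (a), the key observation is that the conditioning event fixes the current configuration to lie in the invariant set ${\bf B}$, and by Lemma \ref{basin} the firing history before time $n$ is irrelevant once we know ${\bf U}_n \in {\bf B}$ — the Markov property of ${\bf U}_n$ reduces the conditional probability to $\P(Y_{n+1}=1 \mid {\bf U}_n = u)$ for some $u \in {\bf B}$. By the definition of the transition probabilities, this equals $\frac{1}{N}\sum_{j=1}^N \phi(u(j))$. So the whole task is to bound $\frac{1}{N}\sum_{j} \phi(u(j))$ from above by $\theta$ and from below by $\eta$ uniformly over $u \in {\bf B}$. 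For the upper bound: since $u(j) \ge (1-\mu^0)\gamma = 0$ is vacuous for $j=1$ but $u(j) \ge (1-\mu^{j-1})\gamma > 0$ for $j \ge 2$, and in any case $\phi$ is nondecreasing with $\phi(u(j)) \le \phi(\gamma)$, we get $\sum_j \phi(u(j)) \le N\phi(\gamma)$; but the smallest coordinate satisfies $u(1) \ge 0$, and more to the point I should use that at least one coordinate (the one that just reset, index $1$) contributes essentially $\phi(0)=0$, giving $\sum_j \phi(u(j)) \le (N-1)\phi(\gamma)$, hence the bound $\theta = \phi(\gamma)(1-1/N)$. For the lower bound: using $u(j) \ge (1-\mu^{j-1})\gamma$ and the fact that on the relevant range $\phi$ agrees with the identity (or at worst $\phi(u(j)) \ge (1-\mu^{j-1})\phi(\gamma)$ by concavity/monotonicity of $\phi$ composed with the linear lower bound), sum over $j=1,\dots,N$ to get $\sum_j \phi(u(j)) \ge \phi(\gamma)\sum_{j=1}^N (1-\mu^{j-1}) = \phi(\gamma)\left(N - \frac{1-\mu^N}{1-\mu}\right)$, which upon dividing by $N$ is exactly $\eta$. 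This matches the remark in the text that both bounds "follow directly from integrating the bounds given in Lemma \ref{basin}."

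For part (b), I would argue by induction on $t$ together with the invariance of ${\bf B}$. Starting from ${\bf U}_0 \in {\bf B}$, Lemma \ref{basin} guarantees that as long as the system keeps firing it stays in ${\bf B}$, so the event $\{T > t\} = \{Y_1 = \dots = Y_t = 1\}$ (more carefully, $\{Y_1=1,\dots,Y_t=1\}$ implies $T \ge t$, and since we want $T>t$ I should track this indexing precisely, but it is cosmetic) can be decomposed as a product of $t$ conditional one-step firing probabilities, each of which lies in $[\eta,\theta]$ by part (a) — crucially because the conditioning at each step still has ${\bf U}_n \in {\bf B}$ thanks to Lemma \ref{basin}. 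Multiplying the $t$ bounds yields $\eta^t \le \P(T>t) \le \theta^t$.

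The main obstacle is bookkeeping rather than conceptual: one must be careful that the lower-bound inequality $\phi(u(j)) \ge (1-\mu^{j-1})\gamma \wedge$-type estimate is legitimate even when $(1-\mu^{j-1})\gamma \ge 1$, i.e. when the truncation of $\phi$ kicks in — but in that case $\phi(u(j)) = 1 \ge (1-\mu^{j-1})\phi(\gamma)$ still holds since $\phi(\gamma) \le 1$, so the estimate survives; similarly for the upper bound $\phi(\gamma)$ is the correct ceiling whether or not $\gamma \ge 1$. The only genuine subtlety is justifying the reduction in part (a): the conditioning event $\{Y_n=1, {\bf U}_n \in {\bf B}, Y_{n-k}^{n-1} = y_1^k\}$ must be handled by first conditioning on the full configuration ${\bf U}_n = u$ (which, being in ${\bf B}$, gives the desired one-step bounds independent of $u$) and then averaging out — since the bounds are uniform in $u \in {\bf B}$, they pass through the averaging, and the past $Y_{n-k}^{n-1}$ drops out by the Markov property. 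Once this reduction is cleanly stated, the rest is the arithmetic of the geometric-series identity $\sum_{j=1}^N \mu^{j-1} = \frac{1-\mu^N}{1-\mu}$.
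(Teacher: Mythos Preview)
Your proposal is correct and follows essentially the same route as the paper's own proof: for (a) you reduce via the Markov property to bounding $\tfrac{1}{N}\sum_j \phi(V_n(j))$ for $V_n\in{\bf B}$ with $V_n(1)=0$, use $\phi(V_n(j))\le\phi(\gamma)$ for the upper bound and $\phi(V_n(j))\ge(1-\mu^{j-1})\phi(\gamma)$ for the lower bound (the paper makes exactly these two estimates and sums them), and for (b) you telescope $\P(T>t)$ into a product of one-step conditional firing probabilities, invoking Lemma~\ref{basin} to keep ${\bf U}_n\in{\bf B}$ along the way. Your write-up is, if anything, more explicit than the paper's about the role of the conditioning $Y_n=1$ in forcing $V_n(1)=0$ and about why the truncation in $\phi$ does not spoil the lower-bound inequality.
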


\begin{proof}
To prove $(a)$
we   have
$\phi(V_{n}(1))=0$
and  the function $\phi$ also determines that  $ \phi(V_{n}(i)) \le  \phi(\gamma) $. 
Thus, the upper bound in $(b)$ follows summing up along $i=2,...,N$ and dividing by $N$.
Similarly  $ \phi(V_{n}(i))\ge \phi( \gamma(1-\mu^i))  =   \min \{ \gamma(1-\mu^i), 1 \}  \ge \phi(\gamma) (1-\mu^i) $.
We also get the lower bound summing up  along $i$. 

To prove $(b)$ consider
\[
\P(T> t)  =  \prod_{i=1}^{t}  \P( T >i |  T> i-1 )   \P(T >0) .
\]
The last factor is equal to one by definition of $T$.
Each factor in the product verifies
\[
 \P( T >i |  T> i-1 ) =
\P( Y_{n+1}=1 |    Y_{ n-i }^{n-1}={\bf 1} ) .
\]
Using  $(b)$ we finish the proof.
\end{proof}

The last statement of the previous lemma establishes bounds for the expected length of the firing blocks,
$(1-\theta)^{-1} \le \E(T)\le (1-\eta)^{-1}$. Now, when $\gamma\ge 1$, one has $(1-\theta)^{-1}= N$ while for the $\gamma < 1$ case (considering large $N$) 
$(1-\theta)^{-1}\approx (1-\gamma)^{-1}$.
The corresponding values of $(1-\eta)^{-1}$ are close to the previous ones.

This fact motivates us to name $\gamma_c=1$ a critical value of the parameter $\gamma$ for large $N$. In fact, we  refer to the case $\gamma<\gamma_c$ as subcritical in which case the size of a firing block is independent on  $N$. Conversely, we refer to  $\gamma>\gamma_c$ as the supercritical case where the firing activity is sustained for a geometric time of parameter $1/N$, during which  the system remains in a metastable state where  the average potential is very close to $\gamma$.

\subsection{ Non-firing blocks} \label{nofire}

Different from the firing regime who has a close to Markov behavior, the non-firing regime is ruled  by a property  close to the
renewal one. That means that one has to look back until the last fire of the system, and the distribution of the
non-firing blocks depends on how long this last discharge happened.

Yet, in the super-critical case,  
we distinguish two sub-regimes. One occurs since the system stopped firing and up to having a potential lower than one.
This depends, therefore, in $\gamma$ and is due  to $m=\inf\{k \in  \mathbb{ N}\ | \  \mu^k\gamma<1 \}$.
In this regime, by the shape of the  function $\phi$, the probability of non-firing keeps being the same and equals to $1/N$.  
The second sub-regime begins when the potential gets below one and there is no uniformity of the 
non-firing probabilities, they depend on the potential level itself.

The following lemma is the counterpart of Lemma \ref{fire} 
in the non-firing  regime.
It provides bounds for the potential ${\bf U}_n$ itself and for the probability that the system fires
given that there are exactly $k$ consecutive
times the system did not fire in the immediate past.

\begin{lemma} \label{nonfire}

Suppose ${\bf U}_{n-k} \in {\bf B}$ 
\begin{itemize} 
\item[(a)] Suppose  $Y_{n-k+1}^{n}={\bf 0}$ (empty set in case $k=0$), $Y_{n+k}=1$. Then  ${\bf U}_{n+1} \in \mu^ k {\bf B}$

\item[(b)] Let $m=\lceil \frac{\log(1/\gamma)}{\log \mu} \rceil $.  Then
\begin{itemize}
\item for $k < m$ 
\beqn \label{pqx1}
   \eta \le \P(Y_{n+1}=1 |  Y_{n-k+1}^{n}={\bf 0}, Y_{n+k}=1, {\bf U}_{n-k}\in {\bf B}) \le  \theta .
\eeqn
\item for $k \ge m$ 
\beqn \label{pqx2}
  \mu^{k} \eta 
 \le 
\P(Y_{n+1}=1 |  Y_{n-k+1}^{n}={\bf 0}, Y_{n+k}=1, {\bf U}_{n-k}\in {\bf B}) \le \mu^k \theta .
\eeqn
\end{itemize}
\end{itemize}

\end{lemma}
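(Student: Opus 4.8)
\emph{Plan.} Both items reduce to combining the leak-only dynamics on the order statistics with the two-regime shape of $\phi$; the Markov property enters only through the identity
\[
\P\big(Y_{n+1}=1\mid {\bf U}_n=u,\ \mathrm{past}\big)=\frac1N\sum_{i=1}^N\phi(u(i))=\frac1N\sum_{i=1}^N\phi\big(V_n(i)\big),
\]
so everything boils down to controlling the sorted coordinates $V_n$ on the conditioning event. For item (a): on $\{Y^{n}_{n-k+1}={\bf 0}\}$ the chain only leaks for $k$ consecutive steps, so iterating $k$ times the fact (the first lemma above) that a non-firing step replaces each order statistic $V(i)$ by $\mu V(i)$ gives $V_n(i)=\mu^k V_{n-k}(i)$ for all $i$; since ${\bf U}_{n-k}\in{\bf B}$ means $(1-\mu^{i-1})\gamma\le V_{n-k}(i)\le\gamma$, scaling by $\mu^k$ yields $(1-\mu^{i-1})\mu^k\gamma\le V_n(i)\le\mu^k\gamma$, i.e.\ ${\bf U}_n\in\mu^k{\bf B}$, while the discharge marking the start of the block (the hypothesis $Y_{n-k}=1$) forces $V_{n-k}(1)=0$ and hence $V_n(1)=0$. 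That is item (a), up to the harmless one-step relabelling of time.

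\emph{Item (b).} We now work with $V_n\in\mu^k{\bf B}$ and $V_n(1)=0$, so $(1-\mu^{i-1})\mu^k\gamma\le V_n(i)\le\mu^k\gamma$, and must estimate $\frac1N\sum_i\phi(V_n(i))$. The split is dictated by where the saturation level $1$ of $\phi$ sits relative to $\mu^k\gamma$, which is exactly what $m$ records: $k<m\Leftrightarrow\mu^k\gamma\ge 1$ and $k\ge m\Leftrightarrow\mu^k\gamma\le 1$. If $k<m$, then for $i\ge 2$ one has $\phi(V_n(i))=\min(V_n(i),1)\le\phi(\gamma)$ while $\phi(V_n(1))=0$, and summing and dividing by $N$ gives the upper bound $\theta$; for the lower bound, monotonicity of $\phi$ together with $\mu^k\gamma\ge 1$ gives $\phi(V_n(i))\ge\phi\big((1-\mu^{i-1})\mu^k\gamma\big)\ge(1-\mu^{i-1})\,\phi(\gamma)$, and summing the geometric series $\sum_i(1-\mu^{i-1})$ reproduces precisely the constant $\eta$, exactly as in the proof of Lemma~\ref{fire}; this gives \eqref{pqx1}. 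If $k\ge m$, then every coordinate is $\le\mu^k\gamma\le 1$, so $\phi$ acts linearly there and $\frac1N\sum_i\phi(V_n(i))=\mu^k\cdot\frac1N\sum_i V_{n-k}(i)$; bounding $\sum_i V_{n-k}(i)$ above and below by integrating the same ${\bf B}$-inequalities (and using $V_{n-k}(1)=0$) yields these estimates with the extra factor $\mu^k$, namely \eqref{pqx2}.

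\emph{Main obstacle.} The real work is the constant bookkeeping across the threshold $k=m$: one must keep straight that below $m$ the operative bound on $\phi$ is its saturation value $\phi(\gamma)$, while above $m$ it is the linear value, and check that the precise choice $m=\lceil\log(1/\gamma)/\log\mu\rceil$ --- equivalently $\mu^{m}\gamma\le 1<\mu^{m-1}\gamma$ --- is exactly what makes both the cap and the matching lower bound come out to the stated $\eta$ and $\theta$ in each regime. A secondary, more conceptual, point is that the non-firing regime is renewal-like rather than Markov, so the conditioning event must encode \emph{how long ago} the last discharge occurred --- here $k$ steps, i.e.\ the fire at time $n-k$ followed by $k$ consecutive failures to fire --- and only once that is fixed does the Markov property apply, and then only to the single transition out of ${\bf U}_n$.
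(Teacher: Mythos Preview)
Your argument is precisely the paper's approach: the paper's proof of this lemma is the single line ``follows mutatis mutandis the proof of Lemma~\ref{fire}'', and what you wrote is the spelled-out version of that---iterate the pure leak $k$ times to land in $\mu^k{\bf B}$, then bound $\tfrac1N\sum_i\phi(V_n(i))$ exactly as in Lemma~\ref{fire}, with the split at $k=m$ recording whether $\mu^k\gamma$ sits above or below the saturation level of $\phi$. One small bookkeeping point worth flagging: in the $k\ge m$ branch your linearization gives $\tfrac1N\sum_i\phi(V_n(i))=\mu^k\cdot\tfrac1N\sum_i V_{n-k}(i)$, and applying the ${\bf B}$-inequalities to $V_{n-k}$ (which are in terms of $\gamma$, not $\phi(\gamma)$) yields the pair $\mu^k\gamma\,\eta/\phi(\gamma)$ and $\mu^k\gamma\,\theta/\phi(\gamma)$; for $\gamma>1$ this means the upper bound you actually obtain is $\mu^k\gamma\,\theta$ rather than the stated $\mu^k\theta$, a harmless discrepancy in the constant (the lower bound $\mu^k\eta$ is a fortiori valid) and not a defect in your reasoning.
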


\begin{proof}
This proof  follows mutatis mutandis the proof of  Lemma \ref{fire}.
\end{proof}

The following lemma describes the behavior of the length of each non-firing block  of the system 
\emph{once the potential gets below one}. As before for the firing one, we set the origin of the non-firing block at the origin.

\begin{lemma} \label{bonfire}
Let ${\bf U}_{0}\in {\bf B}$ and define $S=\max\{ \max\{n:  Y_n=0\}-m,0\}$.
Let 
$$
K=(1-   \mu^{m}\theta)^{\frac{-1 }{1-\mu}}, \qquad \text{and}   \qquad
J= (1-\mu^{m}\eta)^{   (1-\eta)  \frac{-1}{1-\mu}  } .
$$
Then
\begin{itemize}
\item[(a)] The probability  that the system  completely ceases to  fire  at any non-firing block verifies
\[ 
   \frac{1}{K}
 \le  \P( S=\infty   )  \le    \frac{1}{J} .
 \]
\item[(b)] 
The conditional distribution of $S$ given that it is finite verifies 
\[  
 \frac{  K^{-\mu^t }  -   1 }{K - 1 }
\le  \P( S \ge t |   S<\infty)  
\le  \frac{  J^{-\mu^t }  -   1 }{J - 1 } .
\] 
\end{itemize}
\end{lemma}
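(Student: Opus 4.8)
The plan is to exploit the near‑renewal structure of the non‑firing regime after the potential has dropped below $1$. First I would place the non‑firing block at the origin with ${\bf U}_0\in{\bf B}$ and condition on the block reaching step $m$ — equivalently, on the system entering the sub‑regime where the potential is below $1$; this conditioning is what lets the first $m$ non‑firing steps disappear from the final constants. Writing $q_k$ for the probability that the system fires at step $k+1$ given that it did not fire at steps $1,\dots,k$ and ${\bf U}_0\in{\bf B}$, Lemma~\ref{nonfire}(b) gives $\mu^k\eta\le q_k\le\mu^k\theta$ for every $k\ge m$, hence $1-\mu^k\theta\le 1-q_k\le 1-\mu^k\eta$. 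Then
\[
\P(S=\infty)=\prod_{k\ge m}(1-q_k),\qquad \P(S\ge t)=\prod_{k=m}^{m+t-1}(1-q_k),
\]
and, setting $C_t:=\prod_{k\ge m+t}(1-q_k)=\P(S=\infty)/\P(S\ge t)$, the whole statement reduces to estimating truncations of the infinite product $\Pi(a):=\prod_{i\ge 0}(1-a\mu^i)$, $0<a<1$.

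Next I would record two elementary bounds on $\Pi(a)$: from $1-x\le e^{-x}$ one gets $\Pi(a)\le e^{-a/(1-\mu)}$; from the concavity of $x\mapsto\log(1-x)$ — whose graph lies above the chord through $(0,0)$ and $(a,\log(1-a))$, so $\log(1-x)\ge\frac{x}{a}\log(1-a)$ on $[0,a]$ — one gets $\Pi(a)\ge(1-a)^{1/(1-\mu)}$. Taking $a=\mu^m\theta$ in the lower estimate and $a=\mu^m\eta$ in the upper estimate, the latter together with $-\log(1-a)\le a/(1-a)$ and $\mu^m\le 1$ to rewrite $e^{-\mu^m\eta/(1-\mu)}$ as $(1-\mu^m\eta)^{(1-\eta)/(1-\mu)}$, proves part (a): $1/K\le\P(S=\infty)\le 1/J$. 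Repeating the identical estimates with $m$ replaced by $m+t$, and using $(1-a)^s\le 1-as$ for $s\in[0,1]$ to relate the $\mu^{m+t}$‑quantities to $\mu^t$‑powers of the $\mu^m$‑quantities, upgrades this to $K^{-\mu^t}\le C_t\le J^{-\mu^t}$ for all $t\ge 0$, with part (a) the case $t=0$.

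For part (b) I would start from the identity
\[
\P(S\ge t\mid S<\infty)=\frac{\P(S\ge t)-\P(S=\infty)}{1-\P(S=\infty)}=\frac{p}{1-p}\Bigl(\frac{1}{C_t}-1\Bigr),\qquad p:=\P(S=\infty).
\]
Under the estimates above, $\P(S\ge t)$ is comparable to $p^{\,1-\mu^t}$ and $C_t$ to $p^{\,\mu^t}$, so the right‑hand side is comparable to $\phi(p):=(p^{\,1-\mu^t}-p)/(1-p)$. An elementary computation shows $\phi$ is increasing on $(0,1)$ — its derivative, after simplification, has a numerator that vanishes at $p=1$ and is decreasing on $(0,1)$, hence positive there — while $\phi(1/K)=(K^{\mu^t}-1)/(K-1)$ and $\phi(1/J)=(J^{\mu^t}-1)/(J-1)$. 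Feeding the bounds $p\in[1/K,1/J]$ from part (a) into the monotone $\phi$ then gives the two‑sided estimate for the conditional law of $S$.

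The hard part will be turning the heuristic relations $\P(S\ge t)\approx p^{\,1-\mu^t}$ and $C_t\approx p^{\,\mu^t}$ into honest inequalities in exactly the directions needed. Since the factors with indices $m\le k<m+t$ occur in both $\P(S\ge t)$ and $\P(S=\infty)$, one cannot insert the endpoint values of the $q_k\in[\mu^k\eta,\mu^k\theta]$ independently; I would have to verify that the configuration of the $q_k$ extremising $\P(S\ge t\mid S<\infty)$ — non‑firing probabilities as large as possible on the first $t$ relevant steps and as small as possible afterwards, or the reverse — is compatible with the endpoint values of $p$ obtained in part (a), so that the monotonicity of $\phi$ actually closes the bound. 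This is bookkeeping of the same kind already needed for the firing blocks in Lemma~\ref{fire} and in part (a), rather than a genuinely new difficulty.
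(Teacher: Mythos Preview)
Your argument for part~(a) is essentially the paper's: the same product decomposition $\P(S\ge t)=\prod_k(1-q_k)$, the same two--sided control $q_k\in[\mu^{m+k}\eta,\mu^{m+k}\theta]$ from Lemma~\ref{nonfire}(b), and the same passage to a geometric sum in the exponent. The only cosmetic difference is that the paper bounds each factor directly via the single inequality $1-xy\le(1-y)^{(1-y)x}$ (and its companion $1-xy\ge(1-y)^x$), whereas you detour through $1-x\le e^{-x}$ and then convert $e^{-\mu^m\eta/(1-\mu)}$ back to the base $1-\mu^m\eta$ using $\log(1-a)\ge -a/(1-a)$. Both routes land on the same $K$ and $J$.

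For part~(b) your route diverges from the paper's. The paper's proof is one line: write
\[
\P(S\ge t\mid S<\infty)=\frac{\P(S\ge t)-\P(S=\infty)}{1-\P(S=\infty)}
\]
and substitute the intermediate bounds $K^{-(1-\mu^t)}\le\P(S\ge t)\le J^{-(1-\mu^t)}$ already obtained on the way to~(a), together with their $t\to\infty$ limits. There is no $\phi$, no monotonicity analysis, and no discussion of extremising configurations of the $q_k$. What this buys the paper is brevity; what your approach buys is an honest acknowledgement of the coupling issue you flag in your last paragraph---that $\P(S\ge t)$ and $\P(S=\infty)$ share the factors $(1-q_k)$ for $m\le k<m+t$, so one cannot in principle bound them independently. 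The paper simply does not engage with this point. Your proposed fix through the monotonicity of $p\mapsto(p^{1-\mu^t}-p)/(1-p)$ is a reasonable way to organise the bookkeeping, but it is considerably more machinery than the paper deploys, and (as you note) it still requires checking that the extremal configurations for the conditional probability are the ones that also realise the endpoints $p=1/K$ and $p=1/J$; that compatibility is not automatic, since the conditional probability is decreasing in $q_{m+k}$ for $k<t$ but increasing for $k\ge t$.
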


\begin{proof} 
By definition of $S$ 
\[
 \P(S \ge t) =  
  \prod_{k=m}^{t-1} \P(S \ge k+1 \  | \ S \ge k)  , 
\]

and
\begin{eqnarray*}
 \P(S \ge k+1 \  | \  S \ge k) 
& =&1-  \P(Y_{k+1}= 1 |  Y_{1}^{k+m} = {\bf 0}, U_{k+m-1}\in {\bf B} ) .\\ 
 \end{eqnarray*}
By 
(\ref{pqx2}), the above display is lower and upper bounded  respectively by
\[
 1-\mu^{m+k}\theta , \qquad  \text{and} \qquad       1-\mu^{m+k}\eta .
\]
The first expression  is bounded from above by
$(1-\mu^{m}\theta)^{\mu^{k}} $.
We used the inequality $1-xy \le (1-y)^{(1-y)x}$ that holds for all $0<x<1, 0<y<1 $.
Summing over $k$ we get
\[ 
 \P(S \ge t)  \ge (1- \mu^m\theta)^{  \frac{1 -\mu^{t}}{1-\mu}  }\ .
\]
which converges to the claimed lower bound and proves the first inequality in (a).
Similarly, to get the upper bound we also use  
(\ref{pqx2})
to get that   $\P(S \ge t) $    is bounded from above by
\[
 (1-  \mu^m\eta)^{(1-\eta) \frac{1 -\mu^{t}}{1-\mu}} \ .
\]
Here we used the inequality $1-xy \le (1-y)^{(1-y)x}$ for all $0<x<1, 0<y<1 $.
Taking limit on $t$, we get the upper bound. \\

The proof of (b) follows by normalization  
to get a tail distribution
\[
 \P(S \ge k \ | \ S < \infty)  =
 \frac{ \P(S \ge k ) -\P (S = \infty )    }{  1- \P(S = \infty)} .
 \]
\end{proof}

\begin{remark}
The last item of the previous lemma says that the time elapsed during a finite, non-firing regime, is independent of $N$
and so it contributes only with a constant (depending on $\mu$ and $w$ to the extinction time).
Moreover, this constant may be  large only when  $\mu$ is close to one.
Moreover, the conditional expectation verifies
$
\mu^{-1} \le  \E(S \ | \ S < \infty)  \le  \mu^{-2}   . 
$
\end{remark}

\begin{remark} \label{level}
To enter the regime described in the second part of Lemma \ref{nonfire} $(b)$, the system needs to cross the strip $[\gamma,1]$,
which takes $m$ consecutive steps.
This fact is determinant for the length of the extinction time of the system.
During a period of recovery, actually take a very long time 
for the system potential  to get again  ${\bf U}_n \in {\bf B}$. However, there is the minimum level $L= \min\{u>0 \ | \  \mu^m u= 1 \}$ which also needs $m$ steps to cross this strip. This level can be reached after a relatively short sequence of fires.
\end{remark}

\subsection{The mixed regime: almost martingale properties} \label{mixed}

In this section we consider the third regime, in which are observed some firing times but  they are not enough to recover,
and also some non-spiking periods but they are also not enough for the system  to stop activities.
That is, the potential of the system fluctuates between the minimum recovery potential level $L$  defined in Remark \ref{level} and a  low  level until it 
fully recovers or it ceases activities.

Here we show that for the critical regime, either the mean potential level or the total potential have a behavior close to a martingale.
For the supercritical case, the behavior is close to a sub-martingale and for the sub-critical, close to a supermartingale.

Consider first a total potential ${\bf U}_n>0$ such that   $U_n(i) \le 1$ for all $i$ and let us  compute the expectation for the next mean potential level,
\begin{eqnarray*}
\E(\ol U_{n+1}| \ol U_n) 
&=& (\mu \ol U_n + w -  \frac{\mu U_n(i_n)+w}{N} ) \ol U_n 
+ \mu \ol U_n (1-  \ol U_n)  \\
&= &   \left( \mu   +w  -  \frac{\mu U_n(i_n)+w}{N} \right) \ol U_n  .
\end{eqnarray*}
where $\ol U_n= \sum_{i=1}^{N}/N$ and $i_n$ denotes the spiking site at time $n$.
Take $\Delta =\frac{\mu U_n(i_n)+w}{N}$.
Since we are considering the regime where $U_n(i)\le 1$ for all $i$, 
$\Delta$ becomes bounded by $\gamma/N$. 
 We conclude that the mean potential verifies
 \[
| \E( \ol U_{n+1}|\ol U_n)  -  (\mu   +w) \ol U_n  | \le     \frac{\gamma}{N}   .
  \]
This can be interpreted as (except for a small random fluctuation bounded by $\gamma/N$) $\ol U_n$ being a martingale, super-martingale and sub-martingale according to the critical, sub-critical and super-critical regime respectively.

We now compute the variance. To that  
\begin{eqnarray*}
 \E( \ol  U^2_{n+1}   | \ol U_n) -  \E^2( \ol  U_{n+1}   | \ol U_n) 
=  (w-\Delta)^2 \ol U_n (1-\ol U_n) .
\end{eqnarray*}

This means that, as the potential approaches zero or one, the system becomes increasingly more biased towards firing or not firing respectively. This means that the system becomes more deterministic and has a major tendency to either recover or die as each approaches one or zero respectively. We can heuristically interpret that the system does not spend a significant amount of time in this regime.

\section{Extinction Time}\label{TauSec}

Let $\tau$ be the time to the last fire of the system
\[
\tau= \max\{n \ge 1 \ | \ Y_n=1 \} .
\]
The next  result states the lifetime of the process is almost surely finite.

\begin{theorem}
The process is almost surely finite. Namely
\[
\P( \tau < \infty) =1 .
\]
\end{theorem}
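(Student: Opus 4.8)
The plan is to show that from any state the system almost surely eventually enters a non-firing block that never ends, so that $\tau<\infty$ a.s. The key observation is already contained in the discussion preceding the theorem: the probability of not firing at any single step is bounded below by $1/N$, since $\sum_{j}\phi(U_n(j))/N \le 1$ gives $\P({\bf U}_{n+1}=\mu {\bf U}_n \mid {\bf U}_n) \ge 1 - \sum_j \phi(U_n(j))/N$, and in fact each firing probability $\phi(U_n(i))/N \le 1/N$ so the non-firing probability is at least $1/N$ only when exactly one component is saturated — more carefully, the non-firing probability is at least $1-\min\{1,\sum_j U_n(j)\}/N$... rather, the clean bound is simply that it is $\ge 1/N$ in the worst case where all $\phi(U_n(j))=1$. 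So at every step, regardless of the current configuration, there is probability at least $1/N$ of not firing.

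First I would make precise the mechanism by which a long run of non-firing steps forces extinction. If the system fails to fire for $k$ consecutive steps starting from a configuration ${\bf U}$ with total potential $u=\sum_i U(i)$, then after those $k$ steps the configuration is $\mu^k {\bf U}$, with total potential $\mu^k u$ and every component bounded by $\mu^k \|{\bf U}\|_\infty$. Choose $k=k({\bf U})$ large enough that $\mu^k \sum_i U(i) < 1/2$, say; then for all subsequent times the total firing probability $\sum_j \phi(\cdot)/N \le \sum_j (\cdot)/N < 1/(2N)$, so the non-firing probability at every later step is $> 1 - 1/(2N) \ge 1/2$. Hence, conditionally on having reached such a low-potential configuration, the probability that the system never fires again is at least $\prod_{\ell \ge 0}(1-\mu^{k+\ell}u \cdot \tfrac{1}{N}) \cdot$ (geometric-type product), which is a convergent product and therefore bounded below by a strictly positive constant $c>0$ depending only on $\mu$ and the (bounded) potential at that moment.

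Next I would run a Borel–Cantelli / renewal argument. Define a sequence of "trial" epochs: starting from time $0$, wait for the next firing (if it never comes we are done), then from that firing time attempt $k$ consecutive non-firing steps, which succeeds with probability at least $(1/N)^k$, a fixed positive number; on success, with probability at least $c>0$ the system never fires again. If either attempt fails we are, by the Markov property, back in a situation of the same type (with total potential bounded by a fixed constant, since after a fire all potentials are at most $\gamma$-ish, in any case bounded by some $C(\mu,w,N)$ — e.g. by Lemma on $\bf B$ the post-fire potential lies in a bounded region, or more crudely the total potential is deterministically bounded once firing has occurred). Thus each trial succeeds in terminating all activity with probability at least $(1/N)^k c =: p > 0$, independently of the past. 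Since there are infinitely many trials available until extinction, the probability of never succeeding is $\le \lim_{n}(1-p)^n = 0$, so extinction occurs a.s., i.e. $\P(\tau<\infty)=1$.

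The main obstacle is bookkeeping the uniform lower bound $c$ on the "never fires again" probability: one must verify that after the first fire the total potential is deterministically bounded by a constant $C=C(\mu,w,N)$ (so that $k$ and $c$ can be chosen uniformly), and then check that the infinite product $\prod_{\ell\ge0}\bigl(1-\tfrac{1}{N}\mu^{k+\ell}C\bigr)$ — or the analogous product tracking the shrinking potential — converges to a positive limit, which is immediate since $\sum_\ell \mu^{k+\ell}C <\infty$. Everything else is the standard second-Borel–Cantelli-type conclusion via the Markov property. One should also handle the trivial edge case where the system never fires at all after time $0$ (then $\tau$ is the last fire before $0$ or $\tau$ is set by convention), which only makes $\{\tau<\infty\}$ easier.
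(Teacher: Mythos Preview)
Your proof is correct and follows essentially the same approach as the paper: both argue that at each return to a bounded-potential region there is a uniformly positive probability that the ensuing non-firing block is infinite, and then conclude via a Borel--Cantelli argument using the Markov property. The paper's version is terser---it leans on the earlier lemma giving $\P(S=\infty)\ge 1/K>0$ and on the recovery-level structure to define the renewal times---whereas you rederive the key estimate directly as the convergent product $\prod_{\ell\ge0}\bigl(1-\mu^{k+\ell}C/N\bigr)$ and set up the trials more explicitly.
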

\begin{proof}
The above probability is lower-bounded by the probability of having a non-firing block of infinite size.
The non-firing blocks are measured after the system profile ${\bf U}_n$ gets to an order profile such that $V_n(i)\ge \frac{1-\mu^i}{1-\mu}L,$
(the recovery level in which the system behaves as in {\bf B})
and then starts to fail. 
The Markovian property of the firing blocks guarantees that the size of the non-firing blocks are independent. Since the probability of a non-firing block to be infinite is positive, Borel-Cantelli lemma says that the claimed lower bound has probability one.
\end{proof}

Now we address the statistical properties of  the extinction time $\tau$ with respect to the size of the system.
The properties depicted in section \ref{regimes} indicate that a  contribution to $\tau$ is given by the firing blocks,
which are of mean size  $N$ for the supercritical case and independent of $N$ for the subcritical case. Furthermore, the other two
regimes are independent of $N$.  So the question is how many firing blocks one typically observes. 

Recall that the firing block is interrupted by a non-firing one, and it has a statistically distinct regime at its very beginning: for a finite time not larger than $m$ it preserves the evolution properties of the firing regime, in the sense that it preserves the probabilities
of firing/non-firing until it crosses the mean potential threshold equal to one.
The first one tends to make very short excursions below the meta-stable state ${\bf B}$ due to the probability $1/N$ of not firing
 (and $N-1/N$ to fire). Thus, the system attempts to cross the critical potential level equal to one, \emph{once it ceases to fire}.
The probability to cross this level is, at least $N^{-(m-1)}$.
These two contributions make the time $\tau$ bounded from below by a mean of $N^{m}$.

The values of $m$ determine all possible different scales of the extinction time. Intuitively, one may think of a simplified version of the system. Beginning at the meta-stable state, the system starts a firing block  and shall attempt to cross the critical potential level with probability bounded by $1/N^m$.  The number of attempts follow a geometric law given by the probability to recover the potential up to $\frac{1-\mu^i}{1-\mu}L$ (during the non-firing or mixed regimes), which does not depend on the system size, as shown in sections
 \ref{nofire} and \ref{mixed}. Therefore, the  observed extinction time  is  ruled  by $N^m$.

Interestingly the exponent $m$ has  non-trivial relation with parameters $\mu$ and $w$. Solid lines in Fig.~\ref{ExpIdeal} show  there are infinitely many transition lines for crescent integer $m$ values in the plane $\mu\times w$. Straight dashed lines in Fig.~\ref{ExpIdeal} indicate the $(\mu,w)$ relation for fixed $\gamma$ values. There is only one possible $m$ value when $\gamma<1$ (green line) that is $m=0$. However, when $\gamma>1$, the lines for constant $\gamma$ cross the $m-$curves as $\mu$ increases, meaning that two systems that reach the same metastable state can present very different extinction time scales. Note that the line for $\gamma=1$ coincides with the magenta curve that defines $m=1$.

The next sections present empirical results of the behavior of the extinction time with $m$ and $\mu$.

\noindent%
\begin{minipage}{\linewidth}
\makebox[\linewidth]{
\includegraphics[width=1\columnwidth]{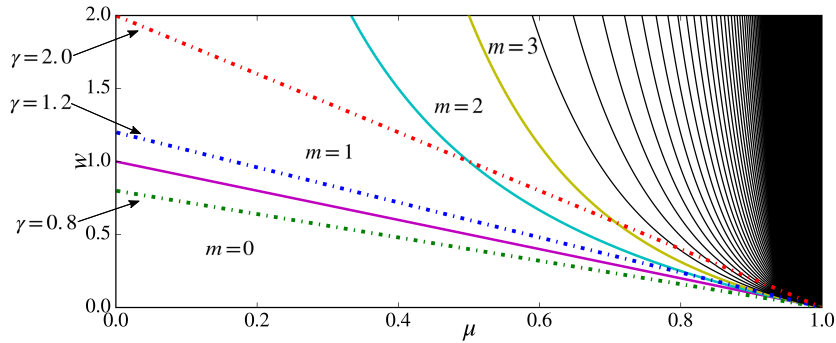}}
\captionof{figure}{ \textit{ Relation of the integer exponent $m$  with parameters $w$  and $\mu$.}  Dashed lines correspond to fixed values of $\gamma=0.8,1.2$ and $2.0$ with colors green, blue and red respectively. Solid lines represent the limit when $m$ values change. The area below the magenta line corresponds to $m=0$; the area between the magenta and cyan lines corresponds to $m=1$; the area between the cyan and yellow line corresponds to $m=2$, and so forth. }

\label{ExpIdeal}
\end{minipage}

\section{Simulations}\label{SecSimul}

\subsection{Process Evolution}

Here we illustrate the model behavior by depicting the evolution of the mean potential in time for different simulated cases. In the subcritical case (red line on Fig. \ref{SampleEvol}), the mean potential tends rapidly towards zero and the system does not present metastability. On the other hand, for the supercritical case (blue line on Fig. \ref{SampleEvol}A ), the system tends to keep firing and stabilizes its potential towards $\gamma$. The system fails to fire for the first time at $n=69$, but because there is still a high probability of firing again, the system resumes firing and rapidly recovers towards $\gamma$.  Fig. \ref{SampleEvol} B depicts the complete time series for this simulation of the supercritical case, where a behavior of  long firing sequences with some ocasional failures can be observed. At some point, a very long sequence of approximately 150 failures is observed, which causes the mean potential (and consequently the probability to fire) to approach zero irreversibly for all numerical purposes (see Sec.~\ref{StopCriterion}). When this event is observed, we take the last time a fire is observed and call it the observed extinction time, which is  $\tau_{obs}=3139$ in this case.

A remarkable fact is that systems with different gain and leakage terms are able to reach the same quasi-stationary state $\gamma$, differing only with respect to \textit{how} the system detours from $\gamma$ (depending on $\mu$) or approaches $\gamma$ (depending on $w$) Fig. ~\ref{SampleEvol} C. Note that for the green line ($w=0.3, \mu=0.7533$) at every failure to fire (after a large enough firing block), the average potential drops to a value below one, whereas for the blue line ($w=0.15,\mu=0.878$), when the system is close to $\gamma$ it must fail to fire more than once in order to drop below the average potential one. Therefore, even though the metastable average potential is the same, the \textit{extinction time} is different for each pair of parameters ($w,\mu$). 

We examined the proportion of time that the average potential is smaller than $\gamma-\epsilon$  before time of extinction $\tau_{obs}$.
Numerically
\[
	q(\gamma,\mu)=\mathsf{mean}\{\frac{ | \{0<n<\tau_{obs}:\ol U_n <\gamma-\epsilon\} | }{\tau_{obs}} \} ,
\]
where each $\mathsf{mean}$ is calculated across $k$ simulations, summing up a total time of activity equal to $\sum_{i=0}^{k} \tau_{obs,i} =\num{5e6}$. Figure ~\ref{SampleEvol} D   shows $q(\gamma,\mu)$ still for the same $\gamma=1.2$, using $\epsilon=0.12$ calculated for 20 trials for each $N=200,300,400$ and $500$. 
We observe two concurrent effects: while $q$ increases with $\mu$ for fixed $N$, it also decreases as $N$ increases when $\mu$ is fixed.

The first effect is explained by result in lemma \ref{nonfire} b. Moreover, we predicted that the typical size of non-firing blocks, conditioned to being finite, can be arbitrarily large. This can be deduced from the conditional distribution in  lemma \ref{nonfire} b. In particular, note that in the extremal case where $\mu=1$ and $w=0$ one obtains the identity transformation which produces the indistinguishability of firing and non-firing regimes. 

The second effect can be explained by the fact that the mean duration of non-firing blocks conditioned to finiteness is independent of N (lemma  \ref{nofire} B) 
 while the firing regime increases with N -- as shown in section 4 that excursions below $\gamma$ occur after a geometric time of mean $N$ for $\gamma>1$.   This explains why the proportion $q$ decreases as $N$ increases.

Overall, Fig.~\ref{SampleEvol} D shows that the time the system spends away from $\gamma$ before extinction does not contribute significantly to the extinction time of the process.

As discussed in Sec.~\ref{TauSec}, the route to extinction is governed by the system attempts to cross the critical potential level equal to 1, causing the extinction time to be dependent on the size of the system with $N^m$. This is empirically shown in the next section.

\noindent%
\begin{minipage}{\linewidth}
\makebox[\linewidth]{
\includegraphics[width=0.5\columnwidth]{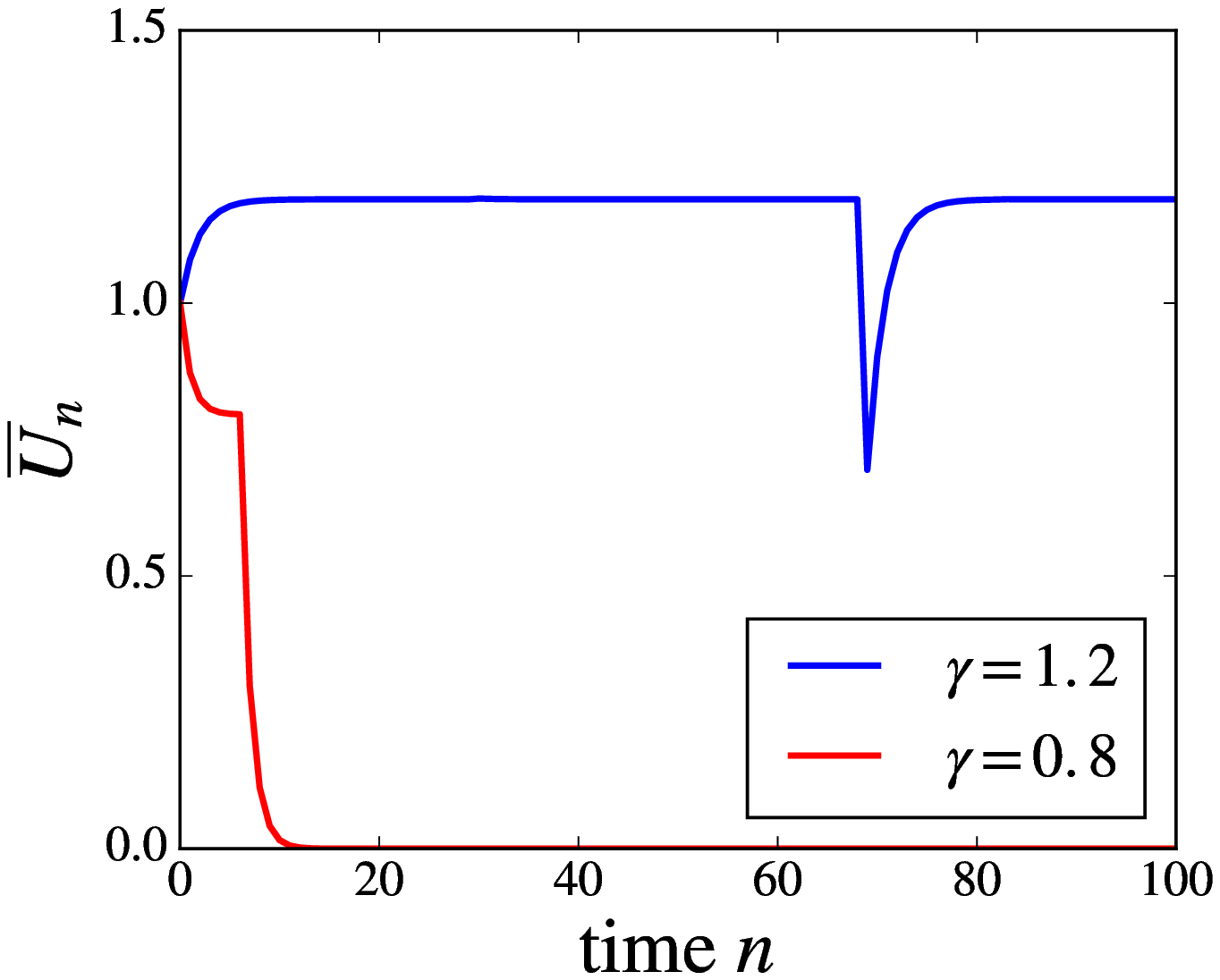}
\includegraphics[width=0.5\columnwidth]{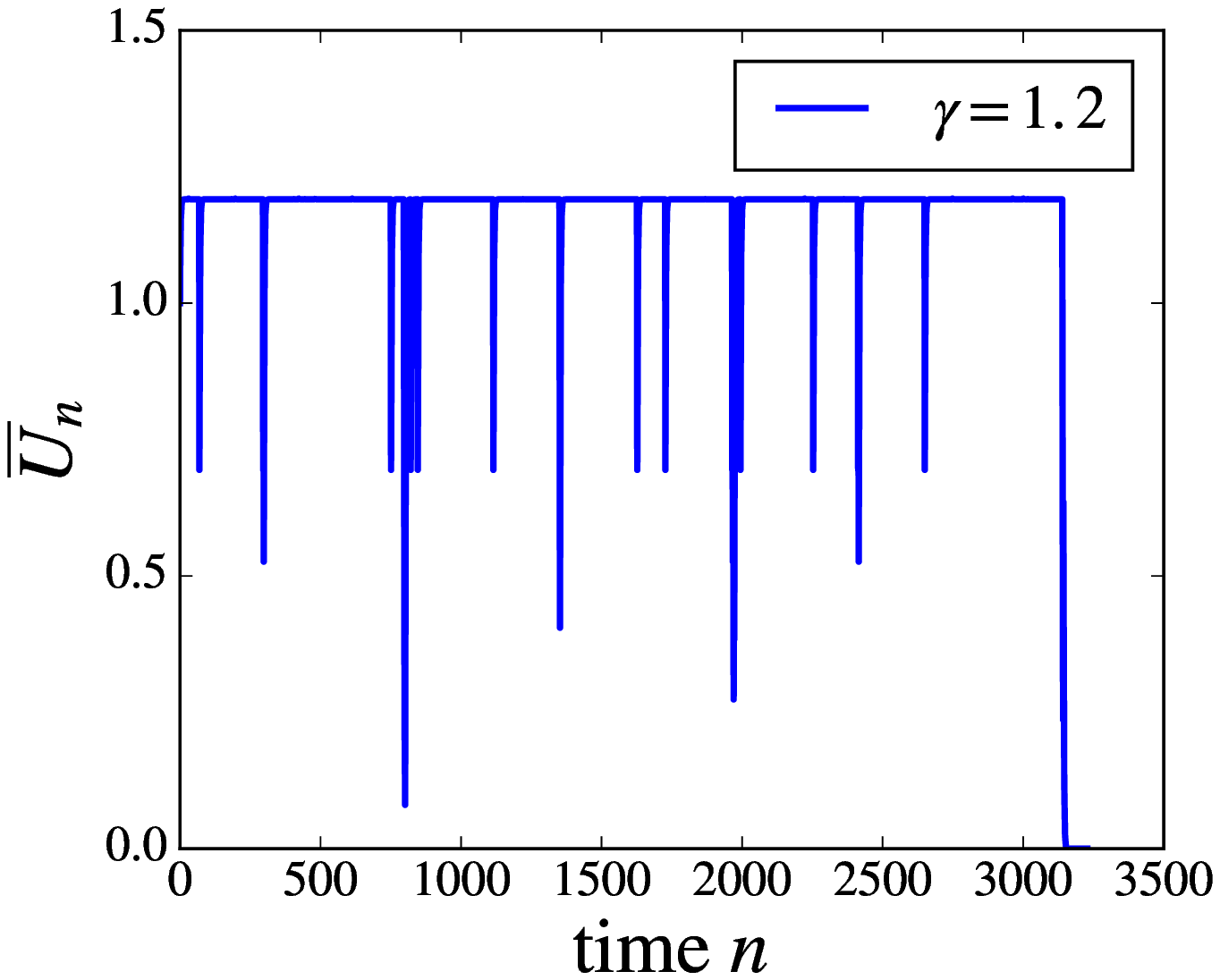}}
\makebox[\linewidth]{
\includegraphics[width=0.5\columnwidth]{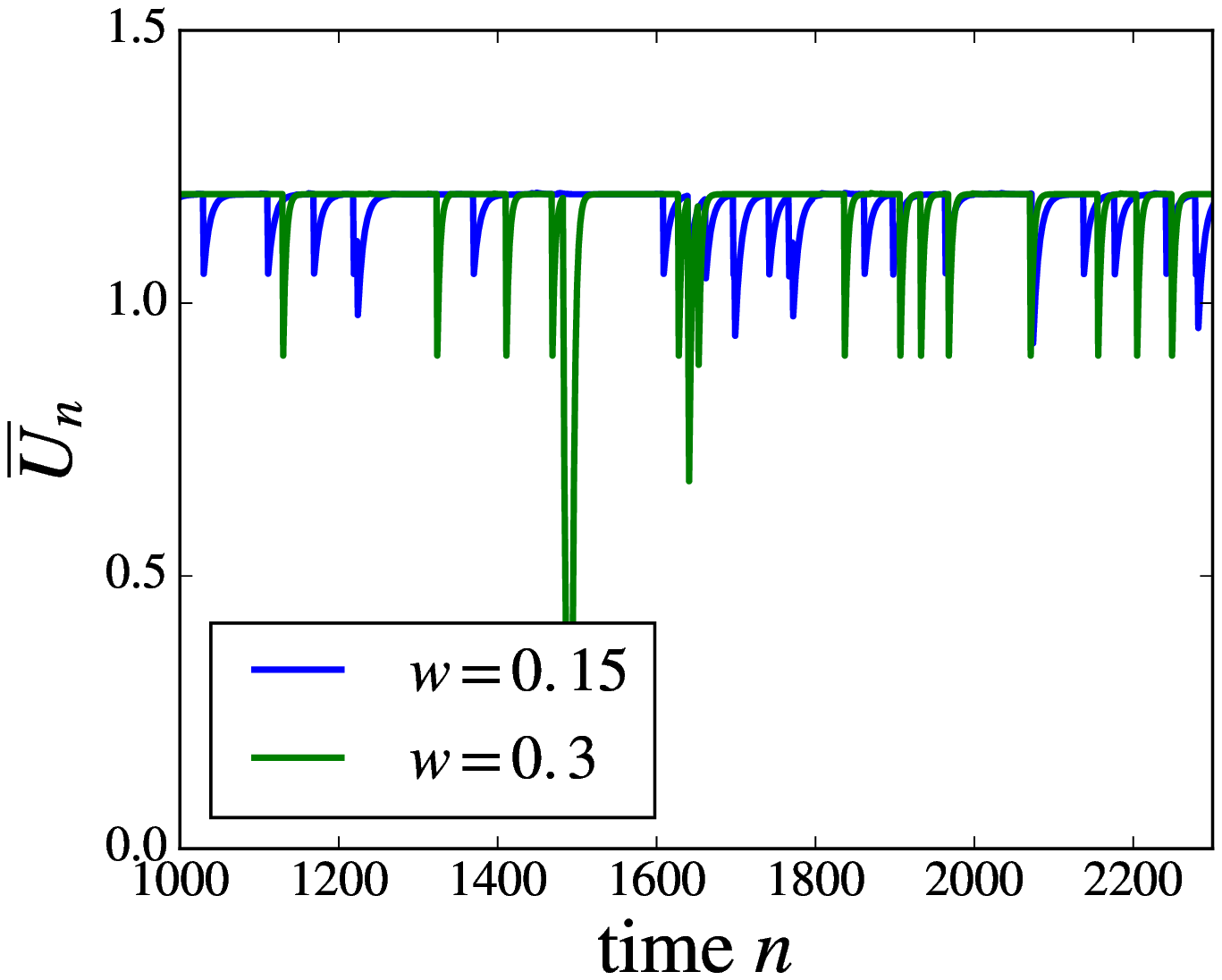}
\includegraphics[width=0.5\columnwidth]{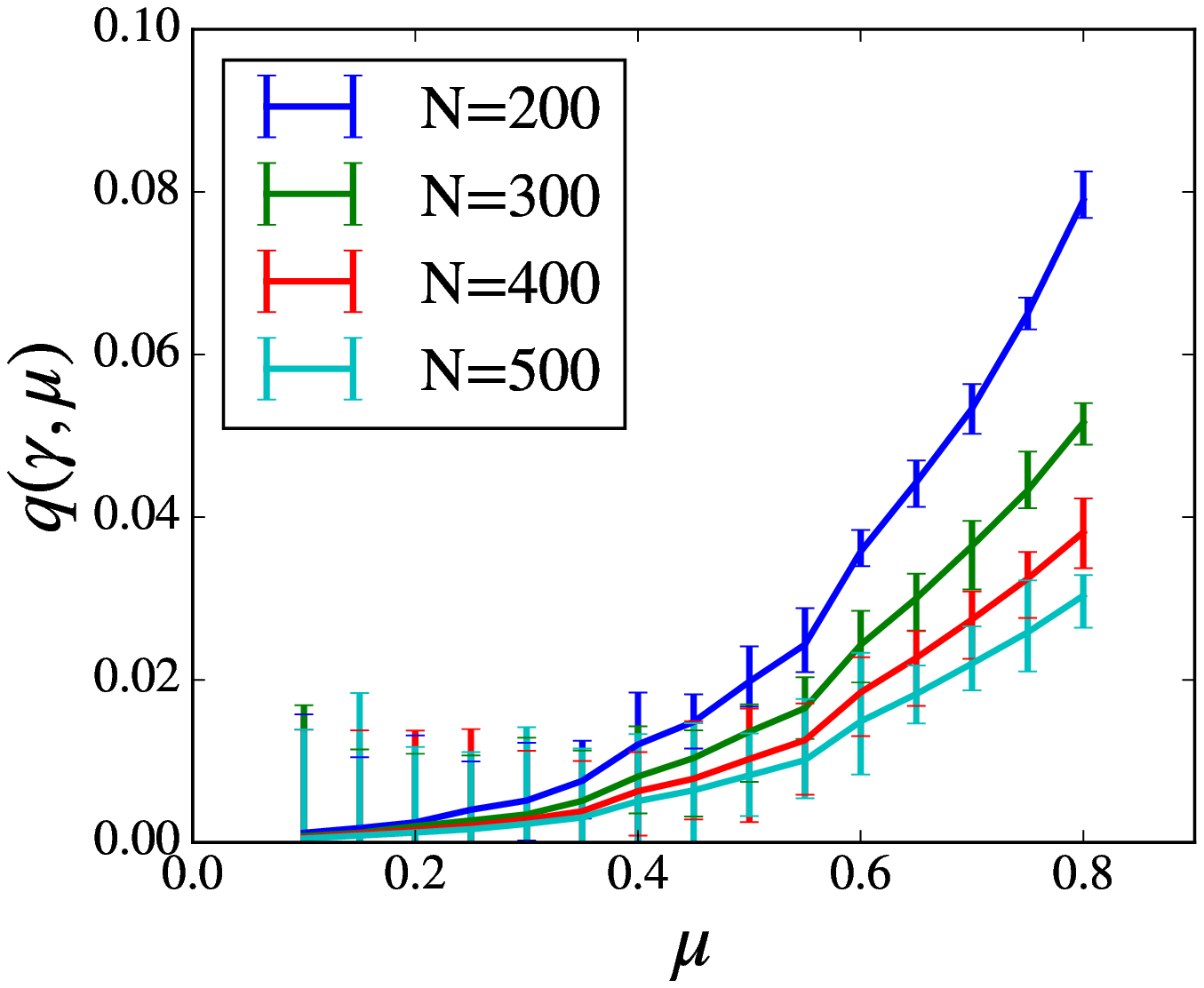}}
\captionof{figure}{{\bf(A)} Initial evolution of the mean potential from uniform initial condition $U_0(i)=1$ for all $i=1..N$, with fixed $w=0.5$ in the supercritical case -- blue line -- with  $\gamma=1.2$ ($\mu=0.583$) and subcritical case -- red line -- with $\gamma=0.8$ ($\mu=0.375$).  {\bf(B)} Complete time series of mean potential for the blue line in (A).
{\bf(C)} Example of two processes with $\gamma=1.2$ and $(w,\mu)=(0.15,0.878)$ and $(0.3,0.7533)$ yielding $m=2$ in the former case and $m=1$ in the latter. 
 {\bf(D)} proportion of time the system spends away from $\gamma$ as a function of $\mu$ for fixed $\gamma=1.2$, for several choices of $N$.}
\label{SampleEvol}
\end{minipage}

\subsection{Empirical extinction time}

\subsubsection{Stopping criterion} \label{StopCriterion}
Determining the extinction time of the process poses and immediate problem in  numerical experimentation because the last time the system fires does not configure a stopping time. In practice we wish to determine with great confidence the moment the system reaches a state of progressively smaller potentials that will prevent the system from ever firing again. To do so, we keep each simulation running until it reaches our stopping criterion time $n_s$

\[
n_s=\min\{n>0:\sum_{i\in I}{U_n(i)<u_c}\}.
\]

By the time the system reaches $n_s$ the probability it will ever fire again is given by

\[
P(\sum_{n=n_s}^{\infty} Y_n >0)=1-\sum_{j=0}^\infty (1-\mu^j u_c) \approx 1-\exp\lbrace-\frac{u_c}{1-\mu}\rbrace.
\]

We use $u_c=10^{-30}$ such that the probability that the system will ever fire again after $n_{s}$ is numerically immaterial for any reasonable choice of $\mu$.

The empirical extinction time of the process is then defined as the last time the system fired before $n_{s}$

\[
\tau_{obs}=\max\{n<n_s: Y_n=1 \}.
\]

Simulation routines were implemented in C++ using the pseudo-random number engine Mersenne Twister (mt19937\_64) of \emph{random} C++ library with machine clock as seed.

\subsubsection{Extracting $m$ from $\tau_{obs}(N)$}

To overcome the difficulty of computing numerically the extinction  time, being of order $N^m$, in the $(w,\mu)$ plane,
we performed simulations for a specific value of $w$ and  how it changes with increasing $N$ and compare its result with theoretical predictions regarding its dependence on the number of components across different values of leakage.  We fixed  $w=0.8$ and obtained the empirical pmf $\tau_{obs}(N)$ for each $\mu$ by computing the extinction time of 1000 simulations for each choice of $N$ and $\mu$. Figure \ref{EmpExps} A shows the empirical average $\overline{\tau}_{obs}(N) $  with respect to $N$ for each $\mu$. Error bars represent the standard deviation of empirical averages calculated by separating simulations in 100 trials of 10 samples  each. 

For each leakage value $\mu$, we use a linear regression model relative to the logarithm of the quantities $N$ and $\overline{\tau}_{obs}(N) $ to estimate the slope ${m_{obs}}$ and its standard deviation and compare it to the theoretical value $m=\lceil \frac{\log(1/\gamma)}{\log \mu} \rceil $. We observe  in Fig. \ref{EmpExps}  A that the estimated ${m_{obs}}$ (adjacent table) are at most $5\%$ distinct from the theoretical $m$ value  for each $\mu$. Moreover, different values of $\mu$ that yield the same $m$ produce parallel lines. 

Figure \ref{EmpExps} B shows how closely ${m_{obs}}$ (red dots with error bars) follows theoretical prediction $m$ (black line) for a more thorough grid of values $\mu$ corresponding to $m$ values up to $3$. One can notice that ${m_{obs}}$ discretely deviates from the theoretical curve as $m$ increases. This can be explained by the effect observed in Fig. \ref{SampleEvol} D that shows that $q(\gamma,\mu)$ increases with $\mu$. For $N=200$ for instance, the proportion of time the system spends away from $\gamma$ before dying increases from $0.1\%$ for $\mu=0.06$ to $4\%$ for $\mu=0.67$. Nonetheless, the empirical values are still agreeable with the theoretical $m$, showing that in practice the power law $N^m$ leads to a good prediction of the extinction time, even for $q(\gamma,\mu)$ as big as $4\%$.

In order to obtain the empirical pdf, we first generate $\mathcal{T}_{\mu,N}$ which is the set of observed extinction times for a specific value of $\mu$, for $N$ fixed, for a certain number of simulations. Then, we  compute
\[
\widehat{p}(T<\tau_{obs}(N)<T+t_b)=\frac{ | \{ \tau_{obs} \in \mathcal{T_{\mu,N} }:T<\tau_{obs}<T+t_b \} | } { | \mathcal{T}_{\mu,N} | }
 \] 
Blue bars in \ref{EmpExps}  C depict the empirical pdf  $\widehat{p}(T<{\tau_{obs}}_{\mu}(N)<T+t_b)$  for an example value of $\mu=0.56$,  with bin size $t_b=39168$,  for 1000 simulations. We indicate in a red curve an exponential distribution of parameter $N^{-m}{c}_{\mu,w}$. Here ${c}_{\mu,w}= e^{-{b}_{\mu,w}}$, where ${b}_{\mu,w}$ is the estimated constant term of the aforementioned linear model.

\noindent%
\begin{minipage}{\linewidth}

\makebox[\linewidth]{
\includegraphics[width=1.1\columnwidth]{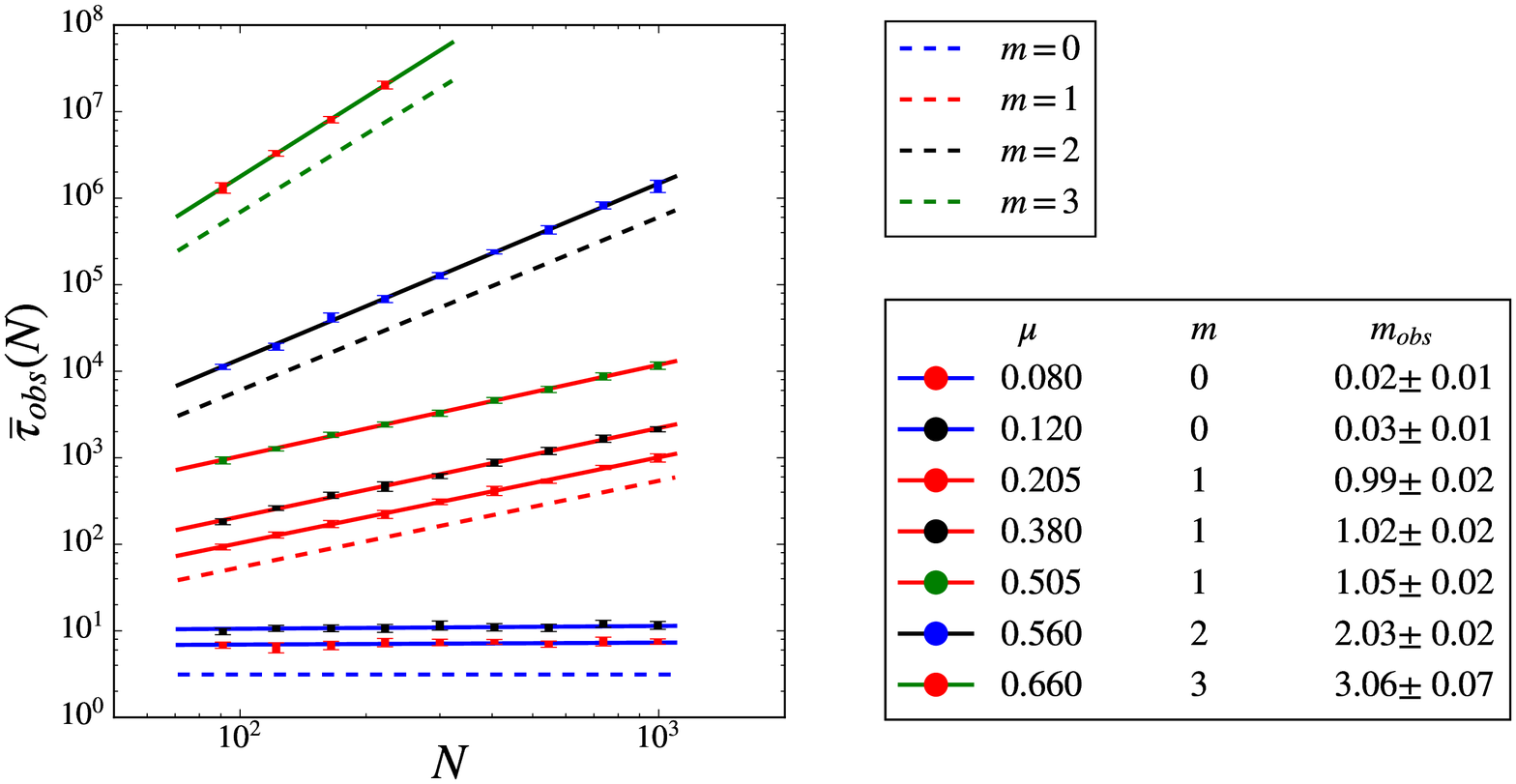}}

\makebox[\linewidth]{
\includegraphics[width=0.53\columnwidth]{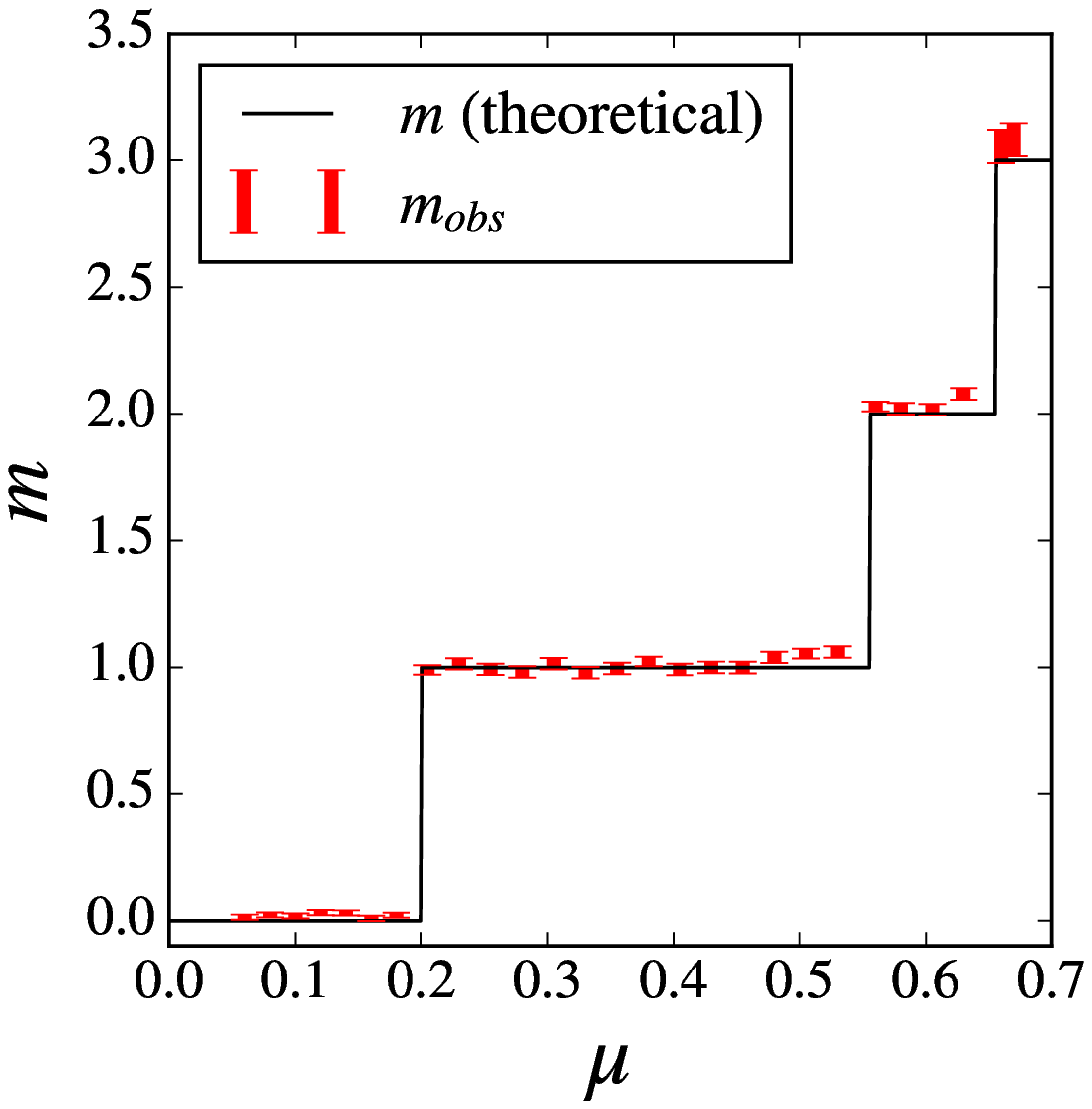}
\includegraphics[width=0.53\columnwidth]{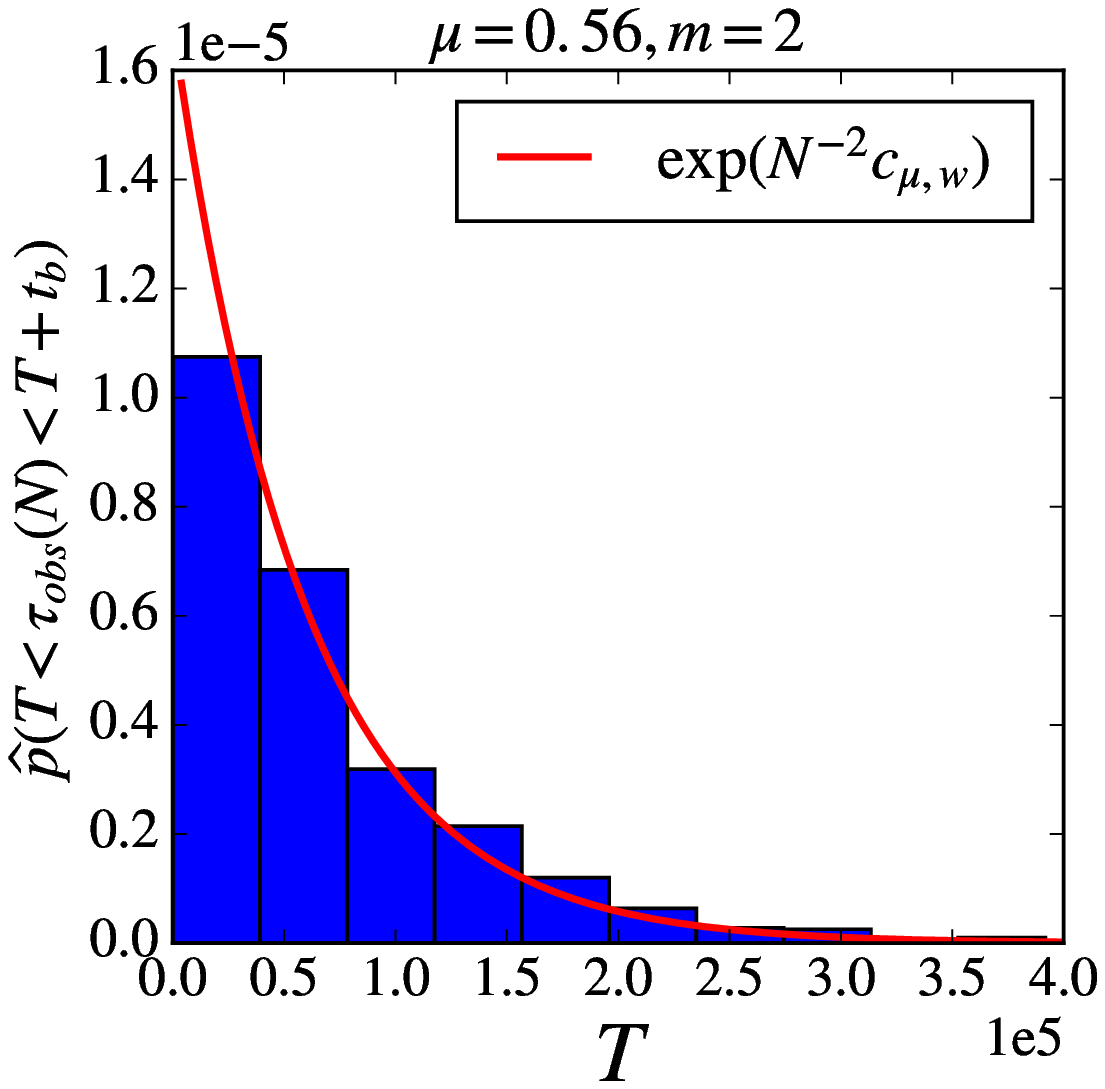}}

\captionof{figure}{\emph{Mean value of extinction time is proportional to   $N^m$.} Gain is fixed $w=0.8$. {\bf (A)} Fitted power law for each $\mu$ value with calculated ${m_{obs}}$ indicated in the adjacent table. Each line is colored according to the theoretical value of $m=\lceil \frac{\log(1/\gamma)}{\log \mu} \rceil $ for each $\mu$ value: red, blue, magenta and green indicate $m=0, 1, 2$ and $3$ respectively. Dotted colored lines are a guide to the eye to indicate the corresponding slope $m$. {\bf{(B)}} Red dots with errorbars indicate the estimated ${m_{obs}}$ with respect to $\mu$. Theoretical $m$ in black. {\bf{(C)}} Empirical pdf of $\tau_{obs}$ for $\mu=0.56$: $\widehat{p}(T<{\tau}_{obs}(N)<T+t_b)$. Red curve indicates the exponential distribution of parameter $N^{-m}{c}_{\mu,w}$. }\label{EmpExps}  
\end{minipage}

\bibliography {quasibib}
\bibliographystyle{plain}

\section*{Aknowledgements}

This article was produced as part of the activities of agreement FAPESP (SP-Brazil) and FCT (Portugal) with reference FAPESP/19805/2014
 and  of project  FAPESP Center for Neuromathematics (grant$\#2013/ 07699-0$ , S.Paulo Research Foundation). LB thanks FAPESP grant no 2016/24676-1. We thank A. Galves for important discussions on this work. 

\end{document}